\DeclarePairedDelimiter\bra{\langle}{\rvert}
\DeclarePairedDelimiter\ket{\lvert}{\rangle}
\tikzstyle{env}=[copoint,regular polygon rotate=0,minimum width=0.2cm, fill=black]
\tikzstyle{every picture}=[baseline=-0.25em]
\tikzstyle{dotpic}=[scale=0.5]
\tikzstyle{diredges}=[every to/.style={diredge}]
\tikzstyle{dot graph}=[shorten <=-0.1mm,shorten >=-0.1mm,scale=0.6]
\tikzstyle{plot point}=[circle,fill=black,minimum width=2mm,inner sep=0]
\tikzstyle{braceedge}=[decorate,decoration={brace,amplitude=2mm,raise=-1mm}]
\tikzstyle{small braceedge}=[decorate,decoration={brace,amplitude=1mm,raise=-1mm}]
\tikzstyle{left hook arrow}=[left hook-latex]
\tikzstyle{right hook arrow}=[right hook-latex]
\tikzstyle{dtriangle}=[fill=yellow,draw=black,shape=isosceles triangle,shape border rotate=-90,isosceles triangle stretches=true,inner sep=1pt,minimum width=0.4cm,minimum height=3mm]
\tikzstyle{vtriang}=[fill=yellow,draw=black,shape=isosceles triangle,shape border rotate=180,isosceles triangle stretches=true,inner sep=1pt,minimum width=0.4cm,minimum height=3mm]
\tikzstyle{triangle}=[fill=yellow,draw=black,shape=isosceles triangle,shape border rotate=90,isosceles triangle stretches=true,inner sep=1pt,minimum width=0.4cm,minimum height=3mm]
\tikzstyle{H box}=[rectangle,fill=yellow,draw=black,xscale=1.0,yscale=1.0, inner sep=1.pt]
\tikzstyle{gbox}=[rectangle,fill=green,draw=black,xscale=1.0,yscale=1.0, inner sep=1.pt]
\tikzstyle{rbox}=[rectangle,fill=red,draw=black,xscale=1.0,yscale=1.0, inner sep=1.pt]
\tikzstyle{bn}=[circle,fill=black,draw=black,scale=.4]
\tikzstyle{wn}=[circle,fill=white,draw=black,scale=.6]
\tikzstyle{dn}=[circle,fill=none,draw=gray]
\tikzstyle{black dot}=[inner sep=0.7mm,minimum width=0pt,minimum height=0pt,fill=black,draw=black,shape=circle]
\tikzstyle{dot}=[black dot]
\tikzstyle{smalldot}=[inner sep=0.4mm,minimum width=0pt,minimum height=0pt,fill=black,draw=black,shape=circle]
\tikzstyle{white dot}=[dot,fill=white]
\tikzstyle{antipode}=[white dot,inner sep=0.3mm,font=\footnotesize]
\tikzstyle{smallwhitedot}=[smalldot,fill=white]
\tikzstyle{alt white dot}=[white dot,label={[xshift=3.07mm,yshift=-0.05mm,font=\footnotesize]left:$*$}]
\tikzstyle{gray dot}=[dot,fill=gray!40!white]
\tikzstyle{smallgraydot}=[smalldot,fill=gray!40!white]
\tikzstyle{box vertex}=[draw=black,rectangle]
\tikzstyle{small box}=[box vertex,fill=white]
\tikzstyle{whitebg}=[fill=white,inner sep=2pt]
\tikzstyle{graph state vertex}=[sg vertex,fill=black]
\tikzstyle{wide copoint}=[fill=white,draw=black,shape=isosceles triangle,shape border rotate=90,isosceles triangle stretches=true,inner sep=1pt,minimum width=1.5cm,minimum height=5mm]
\tikzstyle{wide point}=[fill=white,draw=black,shape=isosceles triangle,shape border rotate=-90,isosceles triangle stretches=true,inner sep=1pt,minimum width=1.5cm,minimum height=4mm]
\tikzstyle{very wide copoint}=[fill=white,draw=black,shape=isosceles triangle,shape border rotate=-90,isosceles triangle stretches=true,inner sep=1pt,minimum width=2.5cm,minimum height=4mm]
\tikzstyle{very wide empty copoint}=[draw=black,shape=isosceles triangle,shape border rotate=-90,isosceles triangle stretches=true,inner sep=1pt,minimum width=2.5cm,minimum height=4mm]
\tikzstyle{symm}=[ultra thick,shorten <=-1mm,shorten >=-1mm]
\tikzstyle{square box}=[rectangle,fill=white,draw=black,minimum height=5mm,minimum width=5mm,font=\small]
\tikzstyle{square gray box}=[rectangle,fill=gray!30,draw=black,minimum height=6mm,minimum width=6mm]
\tikzstyle{copoint}=[regular polygon,regular polygon sides=3,draw=black,scale=0.75,inner sep=-0.5pt,minimum width=7mm,fill=white]
\tikzstyle{point}=[regular polygon,regular polygon sides=3,draw=black,scale=0.75,inner sep=-0.5pt,minimum width=7mm,fill=white,regular polygon rotate=180]
\tikzstyle{gray point}=[point,fill=gray!40!white]
\tikzstyle{gray copoint}=[copoint,fill=gray!40!white]
\newcommand{\edgearrow}{{\arrow[black]{>}}}
\newcommand{\edgetick}{{\arrow[black,scale=0.7,very thick]{|}}}
\tikzstyle{diredge}=[->]
\tikzstyle{rdiredge}=[<-]
\tikzstyle{medium diredge}=[->]
\tikzstyle{short diredge}=[->]
\tikzstyle{halfedge}=[-)]
\tikzstyle{other halfedge}=[(-]
\tikzstyle{freeedge}=[(-)]
\tikzstyle{white edge}=[line width=5pt,white]
\tikzstyle{tick}=[postaction=decorate,decoration={markings, mark=at position 0.5 with \edgetick}]
\tikzstyle{small map edge}=[|-latex, gray!60!blue, shorten <=0.9mm, shorten >=0.5mm]
\tikzstyle{thick dashed edge}=[very thick,dashed,gray!40]
\tikzstyle{map edge}=[|-latex,very thick, gray!40, shorten <=1mm, shorten >=0.5mm]
\tikzstyle{tickedge}=[postaction=decorate,
\tikzstyle{dirtickedge}=[postaction=decorate,
\tikzstyle{dirdoubletickedge}=[postaction=decorate,
\newcommand{\boxshape}[3]{%
\pgfdeclareshape{#1}{
\inheritsavedanchors[from=rectangle] 
\inheritanchorborder[from=rectangle]
\inheritanchor[from=rectangle]{center}
\inheritanchor[from=rectangle]{north}
\inheritanchor[from=rectangle]{south}
\inheritanchor[from=rectangle]{west}
\inheritanchor[from=rectangle]{east}
\backgroundpath{
\southwest \pgf@xa=\pgf@x \pgf@ya=\pgf@y
\northeast \pgf@xb=\pgf@x \pgf@yb=\pgf@y

\@tempdima=#2
\@tempdimb=#3

\pgfpathmoveto{\pgfpoint{\pgf@xa - 5pt + \@tempdima}{\pgf@ya}}
\pgfpathlineto{\pgfpoint{\pgf@xa - 5pt - \@tempdima}{\pgf@yb}}
\pgfpathlineto{\pgfpoint{\pgf@xb + 5pt + \@tempdimb}{\pgf@yb}}
\pgfpathlineto{\pgfpoint{\pgf@xb + 5pt - \@tempdimb}{\pgf@ya}}
\pgfpathlineto{\pgfpoint{\pgf@xa - 5pt + \@tempdima}{\pgf@ya}}
\pgfpathclose
}
}}
\tikzstyle{map}=[draw,shape=NEbox,inner sep=7pt]
\tikzstyle{mapdag}=[draw,shape=SEbox,inner sep=7pt]
\tikzstyle{maptrans}=[draw,shape=SWbox,inner sep=7pt]
\tikzstyle{mapconj}=[draw,shape=NWbox,inner sep=7pt]
\tikzstyle{probs}=[shape=semicircle,fill=gray!40!white,draw=black,shape border rotate=180,minimum width=1.2cm]
\tikzstyle{arrs}=[-latex,font=\small,auto]
\tikzstyle{arrow plain}=[arrs]
\tikzstyle{arrow dashed}=[dashed,arrs]
\tikzstyle{arrow bold}=[very thick,arrs]
\tikzstyle{arrow hide}=[draw=white!0,-]
\tikzstyle{arrow reverse}=[latex-]
\tikzstyle{cdnode}=[]
\tikzstyle{gn}=[dot,fill=green,minimum width=0.3cm,inner sep=0pt]
\tikzstyle{rn}=[dot,fill=red,inner sep=0pt,minimum width=0.3cm]
\tikzstyle{rc}=[dot,thick,fill=white,draw = red,minimum width=0.3cm,inner sep=0pt]
\tikzstyle{gc}=[dot,thick,fill=white,draw= green,inner sep=0pt,minimum width=0.3cm]
\tikzstyle{bc}=[dot,thick,fill=white,draw= blue,minimum width=0.3cm]
\tikzstyle{label}=[circle,fill=white,minimum width=0.3cm]
\tikzstyle{clocklabel}=[dot,fill=yellow,draw=black,font=\tiny,inner sep=0.75pt]
\tikzstyle{rsn}=[circle split,draw,fill=red,font=\tiny,inner sep=0.75pt]
\tikzstyle{gsn}=[circle split,draw,fill=green,font=\tiny,inner sep=0.75pt]
\tikzstyle{bsn}=[circle split,draw,fill=blue,font=\tiny,inner sep=0.75pt]
\tikzstyle{rsc}=[circle split,thick,draw= red,draw,fill=white,font=\tiny,inner sep=0.75pt]
\tikzstyle{gsc}=[circle split,thick,draw= green,draw,fill=white,font=\tiny,inner sep=0.75pt]
\tikzstyle{bsc}=[circle split,thick,draw= blue,draw,fill=white,font=\tiny,inner sep=0.75pt]
\tikzstyle{cnot}=[fill=white,shape=circle,inner sep=-1.4pt]
\tikzstyle{wire label}=[font=\tiny, auto]
\tikzstyle{cdiag}=[matrix of math nodes, row sep=3em, column sep=3em, text height=1.5ex, text depth=0.25ex,inner sep=0.5em]
\tikzstyle{arrow above}=[transform canvas={yshift=0.5ex}]
\tikzstyle{arrow below}=[transform canvas={yshift=-0.5ex}]
\newtheorem{Th}{Theorem}[section]
\newtheorem{theorem}[Th]{Theorem}
\newtheorem{lemma}[Th]{Lemma}
\newtheorem{corollary}[Th]{Corollary}
\newtheorem{remark}[Th]{Remark}
\newcommand{\vast}{\bBigg@{6.5}}
\newcommand{\vertrule}[1][1ex]{\rule{.4pt}{#1}}
\def\bR{\begin{color}{red}}  
\def\bB{\begin{color}{blue}}
\def\bM{\begin{color}{magenta}} 
\def\bC{\begin{color}{cyan}} 
\def\bW{\begin{color}{white}}
\def\bBl{\begin{color}{black}}
\def\bG{\begin{color}{green}}
\def\bY{\begin{color}{yellow}}
\def\e{\end{color}}
\newcommand{\bit}{\begin{itemize}}
\newcommand{\eit}{\end{itemize}\par\noindent}
\newcommand{\ben}{\begin{enumerate}} 
\newcommand{\een}{\end{enumerate}\par\noindent}
\newcommand{\beq}{\begin{equation}}
\newcommand{\eeq}{\end{equation}\par\noindent}
\newcommand{\beqa}{\begin{eqnarray*}}
\newcommand{\eeqa}{\end{eqnarray*}\par\noindent}
\newcommand{\beqn}{\begin{eqnarray}}
\newcommand{\eeqn}{\end{eqnarray}\par\noindent}
\title{ AND-gates in ZX-calculus: \\ Spider Nest Identities and QBC-completeness}
\author{Anthony Munson
\institute{Mathematical Institute \\University of Oxford}
\email{anthony.munson@queens.ox.ac.uk}
\and
Bob Coecke and  Quanlong Wang
\institute{Cambridge Quantum Computing Ltd.}
\email{bob.coecke, harny.wang@cambridgequantum.com}
}
\begin{document}
\maketitle

\begin{abstract}
In this paper we exploit the utility of the triangle symbol which has a complicated expression in terms of spider diagrams in ZX-calculus, and its role within the ZX-representation of AND-gates in particular.  First, we derive spider nest identities which are of key importance to recent developments in quantum circuit optimisation and T-count reduction in particular.  Then, using the same rule set, we prove a completeness theorem for quantum Boolean circuits (QBCs) whose rewriting rules can be directly used for a new method of T-count reduction. We give an algorithm based on this method and show that the results of our algorithm outperform the results of all the previous best non-probabilistic algorithms. 
\end{abstract}

\section{Introduction}
The ZX-calculus \cite{CD1,CD2} is a universal graphical language for qubit  theory, which comes equipped with simple rewriting rules that enable one to transform diagrams representing one quantum process into  another quantum process.  More broadly, it is the work-horse of categorical quantum mechanics,  which aims for a high-level formulation of quantum theory \cite{AC1,CKbook}.  

Recently ZX-calculus has been completed by Ng and Wang \cite{ng2017universal}, that is, provided with sufficient additional rules so that any equation between matrices in Hilbert space can be derived in ZX-calculus.  This followed  earlier completions  by Backens   for  stabiliser theory \cite{Backens} and one-qubit Clifford+T circuits \cite{Backens2},  and   by Jeandel, Perdrix and Vilmart for general Clifford+T theory \cite{jeandel2017complete}.  

This paper concerns with the `utility' of ZX-rules.  While in principle the rules in 
\cite{ng2017universal}  are  sufficient to any other rule, it is by no means guaranteed that it is in any way intuitive, easy, or even realistic to do so.  While the original rules of \cite{CD1,CD2} have been inherited by all ZX-calculi,   comparing \cite{ng2017universal} with  two more recent universal completeness results \cite{jeandel2018diagrammatic,Vilmart2018} one immediately notices that the additional rules that give completeness are entirely different in each of the papers, and their relationship is by no means obvious.   

Consequently, the new game in town is to match rules on their utility, and in some cases we will need to derive new rules.
In fact, the new rule of the completeness theorem of \cite{Vilmart2018} was in fact discovered/derived by two of the current authors with the particular purpose of quantum circuit simplification in mind \cite{DBLP:conf/rc/CoeckeW18}, and this utility preceded the completeness result.  One ingredient of the axiomatisation of  the original universal completeness result \cite{ng2017universal} is a new primitive of ZX-calculus, the triangle, which is one of the key components to make an algebraic axiomatisation  of ZX-calculus \cite{wangalg2020}.  Although it seems that  the triangle is an extension of the original form of ZX-calculus, it actually can be expressed in terms of original generators such as green and red spiders \cite{jeandel2017complete, ng2017universal}.  

We will explore one utility of the triangle, namely the role of standard rules governing the AND-gate when translated as ZX-rules.  In its simplest form the ZX-encoding of the AND-gate directly involves the triangle, this fact was firstly found in \cite{qwangkfclifford2018},  while its semantically equivalent form was already shown in \cite{CKbook}  (see Exercise 12.10 there). We in particular derive three results based on a same rule set:

{\bf Derivation of spider nest identities.} 
Recently ZX-calculus has been used to outperform all other methods in the area of circuit optimisation \cite{Aleks1, Aleks2, nielbianwang, nielbianwangtqc}.   Key to these for the purpose of T-gate reductions are so-called `phase gadgets' named by Kissinger and van de Wetering  \cite{Aleks1}, and were independently introduced by de Beaudrap and Wang, summarised in \cite{nielbianwang}.   Decomposing larger $\pi/4$-phase gadgets into smaller ones (less than 4 lines) is vital for reducing T-count further than reduction effect of phase gadget fusion. In this paper, we derive a general decomposition theorem for arbitrary phase gadget in terms of AND-gates. As a consequence, we obtain all the spider nest identities which play a key role in \cite{nielbianwang, nielbianwangtqc}. We note that the spider nest identities were derived in this way before the papers \cite{nielbianwang, nielbianwangtqc} were formed, while this paper was in the process of preparing. 

{\bf ZX-completeness for quantum Boolean circuits.}  Using the same rule set, we prove a completeness theorem for quantum Boolean circuits.   Circuit relations by Iwama, Kambayashi and Yamashita that achieve this have been known for a while \cite{iwama2002transformation}, also Cockett and Comfort have proved Iwama et al.'s rules in the symmetric monoidal category, TOF, generated by
the Toffoli gate and computational ancillary bits \cite{CockettComfort}.
We obtain our ZX-completeness result by proving Iwama et al.'s rules in the ZX-calculus with the rule set established in this paper, without any resort to having a Toffoli gate as the generator. This work does place results in circuit re-writing all under the umbrella of ZX-calculus. One particular advantage of this is the availability of automation tools which already has proved to be very useful for the above mentioned circuit optimisation results \cite{Aleks3}. 

{\bf Algorithm for reducing T-count of quantum circuits.}
Based on the rules of completeness for quantum Boolean circuits, we give an algorithm for reducing the T-count (number of T-gates) of a family of benchmark circuits called Galois field multipliers \cite{galoiscircuits}. We show in a table that the results of our algorithm outperform the results of all the previous best non-probabilistic algorithms. The technique in our algorithm can be applied to more benchmark circuits beyond Galois field multipliers.

{\bf Other work.} An early motivation for the GHZ/W-calculus \cite{CK}, now completed by Hadzihasanovic \cite{Amar, hadzihasanovic2017algebra} and known as ZW-calculus, was extended control operations, which is also one of the motivational upshots of the triangle symbol.  The triangle symbol entered the ZX-picture when ZW-calculus was translated to the ZX-context. After the ZX-encoding of the AND-gate with triangles was given in \cite{qwangkfclifford2018}, there came a very recent ZX-alike calculus called ZH-calculus \cite{EPTCS287.2}, which also allows easy representation of the AND-gate composed of fewer nodes than that of the ZX-encoding, however, it is not just about the denotation of the AND-gate, all the rules about the triangle  in this paper are also important for the utility of the AND-gate. 
  For now, ZX-calculus is (still) the umbrella under which quantum circuit optimisation is outperforming all competition, and were most completeness theorems have been stated, so a study of the AND-gate within this context is more than justified.

\section{ZX-calculus generators}


The ZX-calculus lives in a compact closed category whose objects are the natural numbers $\mathbb{N}$ and whose monoidal product is addition, $a\otimes b=a+b$. A general morphism $k\to l$ in this category is simply a $k$-input, $l$-output diagram generated, via finite sequential composition with outputs from one diagram connecting to inputs form another diagram  and tensor composition with diagrams in parallel, by the following elementary diagrams:
	\begin{center}
 	\begin{tabular}{| c c c c |}
  		\hline						  		$Z^{(n,m)}_\alpha:n\to m$	&\hspace{0.2cm}\input{TikZit/generatorGreenSpider.tikz}	&\hspace{0.5cm}$X^{(n,m)}_\alpha:n\to m$	&\hspace{0.2cm}\input{TikZit/generatorRedSpider.tikz}	\\
  		$T:1\to1$			&\hspace{0.2cm}\begin{tikzpicture}
	\begin{pgfonlayer}{nodelayer}
		\node [style=none] (0) at (0, 0.5) {};
		\node [style=none] (1) at (0, -0.5) {};
		\node [style=triangle] (2) at (0, -0) {};
	\end{pgfonlayer}
	\begin{pgfonlayer}{edgelayer}
		\draw (0.center) to (2);
		\draw (2) to (1.center);
	\end{pgfonlayer}
\end{tikzpicture}	&\hspace{0.5cm}$\mathbb{I}:1\to1$	&\hspace{0.2cm}\begin{tikzpicture}
	\begin{pgfonlayer}{nodelayer}
		\node [style=none] (0) at (0, 0.5) {};
		\node [style=none] (1) at (0, -0.5) {};
	\end{pgfonlayer}
	\begin{pgfonlayer}{edgelayer}
		\draw (0) to (1);
	\end{pgfonlayer}
\end{tikzpicture} 				\\
  												&	&	& \\
  		$C_u:2\to0$			&\hspace{0.2cm}\begin{tikzpicture}
	\begin{pgfonlayer}{nodelayer}
		\node [style=none] (0) at (-0.5, 0.25) {};
		\node [style=none] (1) at (0.5, 0.25) {};
	\end{pgfonlayer}
	\begin{pgfonlayer}{edgelayer}
		\draw [bend right=90, looseness=1.75] (0.center) to (1.center);
	\end{pgfonlayer}
\end{tikzpicture}			&\hspace{0.5cm}$C_a:0\to2$ 			&\hspace{0.2cm}\begin{tikzpicture}
	\begin{pgfonlayer}{nodelayer}
		\node [style=none] (0) at (-0.5, -0.25) {};
		\node [style=none] (1) at (0.5, -0.25) {};
	\end{pgfonlayer}
	\begin{pgfonlayer}{edgelayer}
		\draw [bend left=90, looseness=1.75] (0.center) to (1.center);
	\end{pgfonlayer}
\end{tikzpicture}					\\
  												&	&	& \\
	   \multicolumn{2}{|r}{  $\sigma:2\to2$ }&%
	\beginpgfgraphicnamed{TikZit/generatorSwap}
	\InputIfFileExists{TikZit/generatorSwap.tikz}{}{\input{./figures/TikZit/generatorSwap.tikz}}%
	\endpgfgraphicnamed
& \\								
  												&	&	& \\ \hline
	\end{tabular}
\end{center}
where 
$\alpha \in [0, 2\pi)$. Throughout this paper, all the diagrams are read from top to bottom, and non-zero scalars are ignored.  Note that the   triangle node generator is firstly introduced in \cite{jeandel2017complete} which  has  forms purely composed of green and red notes, here we show one of them as given in \cite{CKbook}: 
\begin{equation}\label{triangledecomposeeq}
	\beginpgfgraphicnamed{TikZit/triangledecompose}
	\InputIfFileExists{TikZit/triangledecompose.tikz}{}{\input{./figures/TikZit/triangledecompose.tikz}}%
	\endpgfgraphicnamed
 
\end{equation}
 This means the triangle is not totally external to the original ZX-calculus, but we won't use (\ref{triangledecomposeeq}) as a rule in this paper.
 
  Each generating diagram $D:k\to l$ above has a standard interpretation as a linear map between Hilbert spaces, $\llbracket D\rrbracket:(\mathbb{C}^2)^{\otimes k}\to(\mathbb{C}^2)^{\otimes l}$. By endowing each tensor factor $\mathbb{C}^2$ with its standard inner product and by identifying its elements $(1,0)^T$ and $(0,1)^T$ as the qubit Z-basis states $|0\rangle$ and $|1\rangle$, we can present each map $\llbracket D\rrbracket$ as a matrix:
\par\noindent
$\vast\llbracket$\hspace{0.1cm}\input{TikZit/generatorGreenSpider.tikz}$\vast\rrbracket$ = $|0\rangle^{\otimes m}\langle0|^{\otimes n}+e^{i\alpha}|1\rangle^{\otimes m}\langle1|^{\otimes n}$; \hspace{0.5cm}
$\vast\llbracket$\hspace{0.1cm}\input{TikZit/generatorRedSpider.tikz}$\vast\rrbracket$ = $|+\rangle^{\otimes m}\langle+|^{\otimes n}+e^{i\alpha}|-\rangle^{\otimes m}\langle-|^{\otimes n}$, \newline
where $|+\rangle=\frac{1}{\sqrt{2}}\big(|0\rangle+|1\rangle\big)$ and $|-\rangle=\frac{1}{\sqrt{2}}\big(|0\rangle-|1\rangle\big)$ are the qubit X-basis states:\par\noindent
$\Bigg\llbracket$\hspace{0.1cm}$\Bigg\rrbracket$ = $\begin{pmatrix} 1&1\\ 0&1 \end{pmatrix}$; 
$\Bigg\llbracket$\hspace{0.1cm}$\Bigg\rrbracket$ = $\begin{pmatrix} 1&0\\ 0&1 \end{pmatrix}$; 
$\Bigg\llbracket$\hspace{0.1cm}$\Bigg\rrbracket$ = $\begin{pmatrix} 1&0&0&1 \end{pmatrix}$;

$\Bigg\llbracket$\hspace{0.1cm}$\Bigg\rrbracket$ = $\begin{pmatrix} 1\\ 0\\ 0\\ 1 \end{pmatrix}$;
$\Bigg\llbracket$\hspace{0.1cm}\input{TikZit/generatorSwap.tikz}$\Bigg\rrbracket$ = $\begin{pmatrix} 1&0&0&0\\ 0&0&1&0\\ 0&1&0&0\\ 0&0&0&1 \end{pmatrix}$; 
$\Bigg\llbracket$\hspace{0.1cm}\input{TikZit/generatorEmpty.tikz}$\Bigg\rrbracket$ = $1$, \newline
where 
	\beginpgfgraphicnamed{TikZit/generatorEmpty}
	\InputIfFileExists{TikZit/generatorEmpty.tikz}{}{\input{./figures/TikZit/generatorEmpty.tikz}}%
	\endpgfgraphicnamed
 represents an empty diagram.
We can extend these matrix interpretations to general ZX-diagrams simply by demanding that the composition operations for ZX-diagrams are compatible with those for matrices, 
namely that $\llbracket D_1\circ D_2\rrbracket=\llbracket D_1\rrbracket\circ\llbracket D_2\rrbracket$ and $ \llbracket D_1\otimes D_2\rrbracket=\llbracket D_1\rrbracket\otimes\llbracket D_2\rrbracket$
for every suitable pair of ZX-diagrams $D_1$ and $D_2$.

\section{ZX-calculus rules}


Here is the fragment of stabilizer-style ZX-calculus rules that we will make use of in this paper, where $\alpha, \beta \in [0,~2\pi), \kappa \in \{0, \pi\}$. All of these rules also hold when the colours red and green swapped. We also assume a meta rule that ``only topology/connectedness matters" \cite{CD2}.
\begin{center}
\fbox{$\begin{array}{c}
\\ 
	\beginpgfgraphicnamed{TikZit/ruleS1}
	\InputIfFileExists{TikZit/ruleS1.tikz}{}{\input{./figures/TikZit/ruleS1.tikz}}%
	\endpgfgraphicnamed
\quad(S1)\\
\\ 
\begin{array}{ccc}
	\beginpgfgraphicnamed{TikZit/ruleS2}
	\InputIfFileExists{TikZit/ruleS2.tikz}{}{\input{./figures/TikZit/ruleS2.tikz}}%
	\endpgfgraphicnamed
\quad(S2) &  \quad& %
	\beginpgfgraphicnamed{TikZit/ruleS3}
	\InputIfFileExists{TikZit/ruleS3.tikz}{}{\input{./figures/TikZit/ruleS3.tikz}}%
	\endpgfgraphicnamed
\quad(S3)\\
\\ 
	\beginpgfgraphicnamed{TikZit/ruleB1snd}
	\InputIfFileExists{TikZit/ruleB1snd.tikz}{}{\input{./figures/TikZit/ruleB1snd.tikz}}%
	\endpgfgraphicnamed
\quad(B1) &  \quad &%
	\beginpgfgraphicnamed{TikZit/ruleB2snd}
	\InputIfFileExists{TikZit/ruleB2snd.tikz}{}{\input{./figures/TikZit/ruleB2snd.tikz}}%
	\endpgfgraphicnamed
\quad(B2)\\
\end{array}
\\ 
	\beginpgfgraphicnamed{TikZit/ruleB3}
	\InputIfFileExists{TikZit/ruleB3.tikz}{}{\input{./figures/TikZit/ruleB3.tikz}}%
	\endpgfgraphicnamed
\quad(B3)\\
\\ 
\end{array}$}
\end{center}


Noting that we haven't mentioned adjoints, we represent the transpose as follows:
 \begin{equation}\label{downtriangledef}
 \input{TikZit/downtriangle.tikz}
\end{equation}	
We will make use of the following simple triangle rules of the universal complete rules of \cite{ng2017universal}:
  \begin{center}
  	\begin{tabular}{| c c c c |}
	 \hline 
	 &	&	& \\
  		\input{TikZit/ruleT1.tikz}	&\hspace{0.2cm}(T1)	&\qquad
		\input{TikZit/ruleT2.tikz}	&\hspace{0.2cm}(T2) 	\\
  		&	&	& \\
  		\input{TikZit/definitionTriangleInverse2.tikz}	&\hspace{0.2cm}(T3)	&\qquad
		\input{TikZit/ruleT4.tikz}	&\hspace{0.2cm}(T4) 	\\
  		&	&	& \\ \hline
	\end{tabular}
     \end{center}
These rules represent the fact that the triangle breaks down of unitarity, or equivalently, the non-preservation of unitarity, more specifically, given that the red spider states $X^{(0,1)}_\alpha$ are the qubit Z-basis states:
\beq\label{Eq:Zbasis}
\left\llbracket\hspace{0.1cm}\begin{tikzpicture}
	\begin{pgfonlayer}{nodelayer}
		\node [style=rn, scale=1] (0) at (0, 0.25) {};
		\node [style=none] (1) at (0, -0.25) {};
	\end{pgfonlayer}
	\begin{pgfonlayer}{edgelayer}
		\draw (0) to (1.center);
	\end{pgfonlayer}
\end{tikzpicture}\hspace{0.1cm}\right\rrbracket = |0\rangle
	\qquad 
\left\llbracket\hspace{0.1cm}\begin{tikzpicture}
	\begin{pgfonlayer}{nodelayer}
		\node [style=rn, scale=1] (0) at (0, 0.25) {\scalebox{0.5}{$\pi$}};
		\node [style=none] (1) at (0, -0.25) {};
	\end{pgfonlayer}
	\begin{pgfonlayer}{edgelayer}
		\draw (0) to (1.center);
	\end{pgfonlayer}
\end{tikzpicture}\right\rrbracket = |1\rangle
\eeq
we have:
\begin{itemize}
\item Rules (T1) and (T2) show that the triangle turns orthonormal states into unbiased states. 
\item  Rules (T3) and (T4) show that the inverse of the triangle is not equal to its adjoint. 
\end{itemize}
This breakdown of unitarity greatly extends expressiveness, for example, for control operations where a basis can be used to switch between unbiased vectors.

\section{ZX rules on AND-gates}
In this section, we demonstrate  how to obtain ZX rules on AND-gates. To save space, we have placed all the Propositions in the appendix.

The interpretation (\ref{Eq:Zbasis}) also enables us to view each red spider state as a diagrammatic representation of a classical bit value. As such, the standard ZX-rules (S1) and (B2) easily demonstrate how the classical processes COPY and XOR are represented as ZX-diagrams:
\begin{center}
	\input{TikZit/definitionCOPY.tikz}
	\hspace{0.5cm} \qquad \hspace{0.5cm}
	\input{TikZit/definitionXOR.tikz}.
\end{center}
Also, the unit of the copy process can be seen as a classical deleting process:
\[
	\beginpgfgraphicnamed{TikZit/delete}
	\InputIfFileExists{TikZit/delete.tikz}{}{\input{./figures/TikZit/delete.tikz}}%
	\endpgfgraphicnamed

\] 
Proposition 1 shows how the green spider $X^{(1,m)}_0$ defines the $m$-output generalized COPY  for any $m\geq 1$:
\begin{center}
	\input{TikZit/definitionGeneralCOPY.tikz}.
\end{center}
The quantum AND gate is a quantum version of the basic digital logic AND gate,  which has the following form 
 $$\ket{0}\bra{00}+\ket{0}\bra{01}+\ket{0}\bra{10}+\ket{1}\bra{11}.$$
The first ZX-form of quantum AND gate was given in \cite{CKbook} with slash boxes which represent the diagram on the right side of (\ref{triangledecomposeeq}). It is first realised in \cite{qwangkfclifford2018} that  the 
 quantum AND gate can be simply represented by triangle nodes, thus has a neat representation for its generalisation, the $n$-AND \cite{QWthesis}:
\begin{center}
	\input{TikZit/definitionAND.tikz}
\qquad
	\input{TikZit/definitionGeneralAND.tikz}.
\end{center}
It is straightforward to see that these diagrams give a proper representation of the $n$-input generalized AND. For, by Proposition 6, the 0-AND process simply returns $|1\rangle$; and, for all $n\geq0$, the $(n+1)$-AND process returns $|1\rangle$ whenever its input bit string consists solely of $|1\rangle$ states:
\begin{center}
	\input{TikZit/definitionGeneralAND_Analysis1.tikz}
\end{center}
and returns $|0\rangle$ whenever its input bit string contains at least one $|0\rangle$ state: 
\begin{center}
	\input{TikZit/definitionGeneralAND_Analysis2.tikz}
\end{center}
where $\alpha_i \in \{0, \pi\}, i=1, \cdots, n$.
By identifying the COPY, XOR, and AND  in ZX-diagrams, we have demonstrated that the ZX-calculus has all the machinery necessary to represent Boolean algebra. In light of this fact, we expect that the Boolean identities $(p\oplus q)\cdot r=(p\cdot r)\oplus(q\cdot r)$ and $p\cdot p=p$ to hold:
\begin{center}
	\input{TikZit/ruleA1_Analysis.tikz}
	\hspace{0.5cm} \qquad \hspace{0.5cm}
	\input{TikZit/ruleA2_Analysis.tikz}
\end{center}
Furthermore, AND  is actually a function map, so a homomorphism for  COPY and its unit (see e.g., \cite{CKbook}): 
\[
	\beginpgfgraphicnamed{TikZit/andbialgebra}
	\InputIfFileExists{TikZit/andbialgebra.tikz}{}{\input{./figures/TikZit/andbialgebra.tikz}}%
	\endpgfgraphicnamed
 \hspace{0.5cm} \qquad \hspace{0.5cm}  %
	\beginpgfgraphicnamed{TikZit/andunit}
	\InputIfFileExists{TikZit/andunit.tikz}{}{\input{./figures/TikZit/andunit.tikz}}%
	\endpgfgraphicnamed
  
\]
In terms of ZX-diagrams, these equations for AND translate as:
  \begin{center}
  	\begin{tabular}{| c c c c |}
	 \hline 
	 &	&	& \\
\input{TikZit/ruleA1.tikz}	& (A1) \quad	&   \input{TikZit/ruleA22.tikz} & (A2)
			\\
  		&	&	& \\
\input{TikZit/ruleA3.tikz}	& (A3) \quad	&                   &   \\  
  		&	&	& \\ \hline
	\end{tabular}  
     \end{center}     
The ZX form of the second equality for the AND process as a function map is not listed as a rule, since it can be derived from other rules (see Proposition 8 in the Appendix). 
The rules (A1) and (A3) 
 have the following useful generalisations whose proofs can be found in the Appendix: \\
\[
 \scalebox{0.82}{\input{TikZit/appendixL2.tikz}} \hspace{0.5cm} (L2) \qquad \hspace{0.5cm}
\scalebox{0.82}{\input{TikZit/appendixL3.tikz}} \hspace{0.5cm} (L3)
\]

We note that  (A1) and (A3)  can be derived from another set of rules which give an algebraic axiomatisation of ZX-calculus \cite{wangalg2020}.

\section{AND-gates and spider nest identities} 

In this section, using the  AND-gate representation and rules presented in the previous section, we derive so-called spider nest identities which have been shown to be of vital importance  in reducing T-count with ZX-calculus \cite{nielbianwang, nielbianwangtqc}.  For this purpose, in addition to the rules listed in the figures of previous section, we assume the following equality with
 AND-gates: 
\begin{equation}\label{gadget-and}
 \scalebox{0.92}{\input{TikZit/gadget-andrule.tikz}}
\end{equation}
where $\alpha \in [0, 2\pi)$. It can be checked that the two diagrams on both sides of (\ref{gadget-and}) have the same standard interpretation up to some scalar. 
If we plug  %
	\beginpgfgraphicnamed{TikZit/ket0}
	\begin{tikzpicture}
	\begin{pgfonlayer}{nodelayer}
		\node [style=rn] (0) at (0, 0) {};
		\node [style=none] (1) at (0, -0.25) {};
	\end{pgfonlayer}
	\begin{pgfonlayer}{edgelayer}
		\draw (0) to (1.center);
	\end{pgfonlayer}
\end{tikzpicture}}%
	\endpgfgraphicnamed
  and %
	\beginpgfgraphicnamed{TikZit/bra0}
	\begin{tikzpicture}
	\begin{pgfonlayer}{nodelayer}
		\node [style=none] (0) at (0, 0.25) {};
		\node [style=rn] (1) at (0, 0) {};
	\end{pgfonlayer}
	\begin{pgfonlayer}{edgelayer}
		\draw (1) to (0.center);
	\end{pgfonlayer}
\end{tikzpicture}}%
	\endpgfgraphicnamed
 respectively onto the bottom and top of the leftmost lines of both sides of (\ref{gadget-and}), then we get:
\begin{equation}\label{gadget-and-2line}
\input{TikZit/gadget-andrule2line.tikz}
\end{equation}
which represents the controlled phase shift gate $\Lambda Z_{\alpha}$ \cite{CD2} in two different ways.
 
The following theorem generalises (\ref{gadget-and-2line})  from $2$ qubits to $n$ qubits in terms of AND-gates, as is proved in the Appendix.  
\begin{theorem}\label{decomposetheorem}
For any  $\alpha \in [0, 2\pi),~ n\geq 2$, we have:
\begin{equation}\label{phasedecomfull}
	\beginpgfgraphicnamed{TikZit/phasegadgetdecomfull}
	\InputIfFileExists{TikZit/phasegadgetdecomfull.tikz}{}{\input{./figures/TikZit/phasegadgetdecomfull.tikz}}%
	\endpgfgraphicnamed
  
\end{equation}
where for each $k\in\{2, \cdots, n\} $ lines of the RHS of (\ref{phasedecomfull}), there locates one and only one k-AND gate plugged with a phase with angle $\alpha_k= (-1)^k2^{k-2}\alpha$. Clearly $\alpha_{k+1}=-2\alpha_k$.  
\end{theorem}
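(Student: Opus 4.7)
The plan is to prove (\ref{phasedecomfull}) by induction on $n \geq 2$, taking the base case $n = 2$ directly from equation (\ref{gadget-and-2line}), which is exactly (\ref{phasedecomfull}) with $\alpha_2 = \alpha$ and no higher-$k$ ANDs.

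For the inductive step, I would view the $(n+1)$-line phase gadget as an $n$-line gadget combined with line $n+1$ via the associativity of XOR. Diagrammatically, this amounts to applying the green-spider fusion rule (S1) in reverse to pull one spider out of the gadget head, exposing a $2$-line sub-gadget that acts on line $n+1$ and on a single "accumulator" wire representing the XOR of lines $1,\ldots,n$. Applying equation (\ref{gadget-and-2line}) to this $2$-line sub-gadget converts it into a $2$-AND of phase $\alpha = \alpha_2$ together with two $\alpha/2$ phases, one inherited by line $n+1$ and one pushed back into the $n$-line gadget head. After spider fusion, the leftover $n$-line structure is again a phase gadget, now of angle $\alpha/2$, and the induction hypothesis applies to decompose it into $k$-ANDs of phases $(-1)^k 2^{k-2}(\alpha/2) = (-1)^k 2^{k-3}\alpha$ for $k=2,\ldots,n$.

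The crucial step is then to show that the newly produced $2$-AND, which sits between line $n+1$ and the accumulator wire, can be "pushed past" each of the previously decomposed $k$-ANDs on lines $1,\ldots,n$, converting each into a $(k+1)$-AND that additionally spans line $n+1$. This is exactly where the generalised rules (L2) and (L3) enter the proof, together with the basic AND-gate axioms (A1)--(A3) and the triangle rules (T1)--(T4): each application of (L2) or (L3) fuses a $k$-AND with the adjacent $2$-AND and, after further spider fusion, produces a $(k+1)$-AND whose phase is that of the $k$-AND scaled by $-2$. Carrying this out simultaneously for all $k=2,\ldots,n$ doubles each coefficient $(-1)^k 2^{k-3}\alpha$ to $(-1)^{k+1} 2^{k-2}\alpha = \alpha_{k+1}$, and the residual $2$-AND on the accumulator wire and line $n+1$ provides the missing $2$-AND at level $k=2$ with phase $\alpha_2 = \alpha$.

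The main obstacle I expect is the bookkeeping of auxiliary phases and inverse triangles generated when an AND-gate is moved past another: one must verify that all the $\alpha/2$ and $-\alpha/2$ corrections arising from each application of (\ref{gadget-and-2line}) either cancel via (S1) or coalesce into exactly the individual-line phases demanded on the RHS of (\ref{phasedecomfull}), and that after $n-1$ layers of absorption, exactly one $k$-AND survives for each $k\in\{2,\ldots,n+1\}$ with the prescribed angle. The closed-form $\alpha_k = (-1)^k 2^{k-2}\alpha$ with the recurrence $\alpha_{k+1} = -2\alpha_k$ is the combinatorial signature of this doubling-and-flipping mechanism.
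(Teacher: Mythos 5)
Your overall architecture is the same as the paper's: induction on $n$ with base case (\ref{gadget-and-2line}), followed by a mechanism for absorbing the new line into the existing AND structure (the paper packages this mechanism into two dedicated lemmas, one of which is parametrized by $k$ and performs exactly the $\alpha\mapsto-2\alpha$ doubling you describe). However, the inductive step as you have written it does not produce the right-hand side of (\ref{phasedecomfull}). The RHS for $n+1$ lines must contain, for every $k$-subset of $\{1,\dots,n\}$ \emph{not} involving the new line, a $k$-AND with the unchanged phase $\alpha_k=(-1)^k2^{k-2}\alpha$, in addition to the $(k{+}1)$-ANDs involving line $n+1$ with phase $\alpha_{k+1}=-2\alpha_k$. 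In your scheme the leftover $n$-line gadget carries angle $\alpha/2$, its decomposition yields $k$-ANDs with phases $(-1)^k2^{k-3}\alpha$, and the push-past step then converts \emph{each} of these into a $(k{+}1)$-AND; nothing with phase $\alpha_k$ on a subset of $\{1,\dots,n\}$ survives. The correct split comes from $p\oplus q=p+q-2pq$: the $(n{+}1)$-line gadget of angle $\alpha$ equals the $n$-line gadget at the \emph{full} angle $\alpha$ (giving the untouched $k$-ANDs), a single-line phase on line $n+1$, and an extra gadget of angle $-2\alpha$ on the conjunctions with line $n+1$; distributing the latter is a copying (bialgebra-type) operation, not a replacement of the existing ANDs.

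A second, related problem is that you treat the accumulator wire as if it were one of the circuit lines: the ``residual 2-AND on the accumulator wire and line $n+1$'' is an AND of $x_{n+1}$ against the parity $x_1\oplus\cdots\oplus x_n$, which is not a 2-AND of the theorem's RHS; it must itself be expanded, via the distributivity rule $(p\oplus q)\cdot r=(p\cdot r)\oplus(q\cdot r)$ and idempotence, into the full family of $(j{+}1)$-ANDs containing line $n+1$. Once these two points are repaired, the doubling recurrence $\alpha_{k+1}=-2\alpha_k$ falls out exactly as you predict, so the gap lies in the accounting of the inductive step rather than in the choice of strategy.
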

Using the  the terminology of  \cite{Aleks1}, we denote the following diagram as a phase $n$-gadget:
$$%
	\beginpgfgraphicnamed{TikZit/phasegadgetdefinition}
	\InputIfFileExists{TikZit/phasegadgetdefinition.tikz}{}{\input{./figures/TikZit/phasegadgetdefinition.tikz}}%
	\endpgfgraphicnamed
  $$
where $n \geq 1$.
\begin{corollary}
\begin{equation}\label{gadgetdec}
\input{TikZit/tphasegadgetdecomrect.tikz}
\end{equation}
where $n \geq 3$, for the RHS of (\ref{gadgetdec}), on each line, there is a $1$-gadget with phase angle $\sigma=\frac{(n-2)(n-3)\pi}{8}$, for every two of the $n$ lines, there is a $2$-gadget with phase angle $\tau=\frac{(3-n)\pi}{4}$, and for every three of the $n$ lines, there is a $3$-gadget with phase angle $\frac{\pi}{4}$. 
\end{corollary}
\begin{proof}
Since the angle of the phase gadget is $\frac{\pi}{4}$, we let $\alpha=-\frac{\pi}{2}$ in  (\ref{phasedecomfull}). Then on the RHS of  (\ref{phasedecomfull}), $\alpha_k=0, \forall k \geq 4$. Thus all the k-AND gates plugged with phase
 $\alpha_k$ and $k \geq 4$ become identities, which means we just have phase-plugged 2-AND gates and 3-AND gates left in the RHS of  (\ref{phasedecomfull}). Now if we replace all phase-plugged 2-AND gates and 3-AND gates by the phase gadgets in  (\ref{gadget-and-2line})  and (\ref{gadget-and}) respectively, after phase gadget fusion, we then have the equality (\ref{gadgetdec}).  
\end{proof}
We note that the above proof can be generalised to more general cases if we consider the phase angle  on the left of (\ref{gadgetdec}) to be $\frac{\pi}{2^k}$ for any $k\geq 0$. 

Now if we plug the inverse of the phase gadget on the LHS of (\ref{gadgetdec}) to its both sides, then we get the spider nest identities \cite{nielbianwang, nielbianwangtqc} as follows:
$$%
	\beginpgfgraphicnamed{TikZit/spidernestid}
	\InputIfFileExists{TikZit/spidernestid.tikz}{}{\input{./figures/TikZit/spidernestid.tikz}}%
	\endpgfgraphicnamed
  $$

\section{CNOT-based Quantum Boolean Circuits}
In \cite{iwama2002transformation}, a CNOT gate is defined as an n-bit Toffoli gate \cite{toffoli} denoted by $[t, C]$, where $\ket{x_t}$ is a target bit and  $\ket{x_k}$ is a control bit with $k\in C \subseteq \{1, \cdots, n\}.$
 A quantum Boolean circuit is defined as follows: \medskip
\bit
\item A \textit{quantum Boolean circuit} of size $M$ over qubits $|x_1\rangle,...,|x_N\rangle$ is a sequence of CNOT gates\\ $[t_1,C_1]\cdots[t_i,C_i]\cdots[t_M,C_M]$ where $1\leq t_i\leq N$ and $C_i\subseteq\{1,...,N\}$. \eit
A quantum Boolean circuit is said to be  \textit{proper} and
to compute a Boolean function $f(x_1, \cdots, x_n)$ iff (i) the input state of the circuit $S_0=\ket{a_1}\ket{a_2}\cdots \ket{a_{n+1}}\ket{0}\cdots \ket{0}$, (ii) the final state of the circuit $S_M=\ket{a_1}\ket{a_2}\cdots \ket{a_{n+1}\oplus f(x_1, \cdots, x_n)}\ket{0}\cdots \ket{0}$. There are six transformation rules introduced in \cite{iwama2002transformation} for quantum Boolean circuits:\newline  
(1) $[t_1, C_1]\cdot[t_1, C_1]\iff\varepsilon$, \newline  
(2) $[t_1, C_1]\cdot[t_2, C_2]\iff[t_2, C_2]\cdot[t_1, C_1]$,  if $t_1\not\in C_2$ and $t_2\not\in C_1$,\newline
(3) $[t_1, C_1]\cdot[t_2, C_2]\iff[t_2, C_2]\cdot[t_1, C_1]\cdot[t_1, C_1\cup C_2-\{t_2\}]$ if $t_1\not\in C_2$ and $t_2\in C_1$, \newline
(4) $[t_1, C_1]\cdot[t_2, C_2]\iff[t_2, C_1\cup C_2-\{t_1\}]\cdot[t_2, C_2]\cdot[t_1, C_1]$ if $t_1\in C_2$ and $t_2\not\in C_1$, \newline
(5) $[t_1, \{c_1\}]\cdot[t_2, C_2\cup\{c_1\}]\iff[t_1, \{c_1\}]\cdot[t_2, C_2\cup\{t_1\}]$ if $t_1>n+1$ and no CNOT$_{t_1}$ before $[t_1, \{c_1\}]$, \newline
(6) $[t, C]\iff\varepsilon$ if there is an integer $i$ such that $i\in C$, $i>n+1$, and there is no CNOT$_i$ before $[t,C]$,\newline
where $\varepsilon$ represent the `identity' gate and $\iff$ denote a transformation. 

As a main result of \cite{iwama2002transformation}, this set of transformation rules is proved to be complete for proper quantum Boolean circuits: any two proper quantum Boolean circuits which are equivalent (computing the same Boolean function) can be derived from each other via these transformation rules. This result is claimed by the authors to be ``quite non-trivial and theoretically interesting on its own."

In this section, we represent each quantum Boolean circuit by ZX diagrams and then obtain the ZX-completeness result for quantum Boolean circuits by proving in ZX-calculus the complete set of six transformation rules presented in   \cite{iwama2002transformation}.  Moreover, we give an algorithm for T-count reduction based on these transformation rules and show its outperformance in contrast with results of state-of-the-art algorithms. 






 Now we give diagrammatic representation for CNOT gates. The ZX-diagram for a CNOT gate is simply (see e.g., \cite{QWthesis}):
\begin{center}
	\input{TikZit/definitionGeneralCNOT.tikz}.
\end{center}
As a check to this representation, the reader can verify that in the $n=0$ and $n=1$ cases, this representation reduces to the NOT gate and the standard CNOT gate, as expected: 
\begin{center}
	\begin{tikzpicture}
	\begin{pgfonlayer}{nodelayer}
		\node [style=rn, scale=1] (0) at (0, 0) {\scalebox{0.65}{$\pi$}};
		\node [style=none] (1) at (0, 0.5) {};
		\node [style=none] (2) at (0, -0.5) {};
	\end{pgfonlayer}
	\begin{pgfonlayer}{edgelayer}
		\draw (1.center) to (0);
		\draw (0) to (2.center);
	\end{pgfonlayer}
\end{tikzpicture},\hspace{1cm} \input{TikZit/definitionGeneralCNOT_Analysis2.tikz}
\end{center}
We have now developed the necessary machinery to assert and prove the completeness of our ZX-rules for quantum Boolean circuits. Let $ZX\vdash$ denote what can be inferred from the ZX rules listed in this paper: (S1), (S2), (S3), (B1), (B2), (B3), (T1), (T2), (T3), (T4), (A1), (A2), (A3).

\begin{theorem}\label{qbccomplete} 
If $D_1$ and $D_2$ are any two CNOT-based circuits expressed as ZX-diagrams such that  $\llbracket D_1 \rrbracket=\llbracket D_2\rrbracket$, then $ZX\vdash D_1=D_2$. 
\end{theorem}
The proof is given in the appendix.
\begin{remark}
Theorem \ref{qbccomplete} does not directly follows from the universal completeness of ZX-calculus, because the rules used for the proof of universal completeness are beyond the rules we presented in this paper.
\end{remark}

Below we give an algorithm for reducing T-count (number of T-gates, denoted as $\#$T) of quantum circuits based on the transformation rules. The transformation rules (3) and (4) are all about how to commute two $CNOT_i$ gates when exactly one target meets the other's control bit. Especially, when one of the two gates is the standard CNOT gate and the other an n-bit Toffoli gate, then there will be a new  n-bit Toffoli gate produced after their commuting. Based on this fact and the algorithms in \cite{nielbianwang} and \cite{nielbianwangtqc}, we have the following algorithm for reducing the T-count of a family of benchmark circuits---Galois field multipliers which are composed of  standard CNOT gates  in the middle and standard Toffoli gates on both left and right ends  \cite{galoiscircuits}:
\begin {enumerate}
\item
 Input a circuit of Galois field multiplier.

\item Turn the circuit into a ZX-diagram.

\item Start from the left most standard CNOT gate, commute it through to the left or right end of the circuit (outside of all standard Toffoli gates) with standard Toffoli gates using the  transformation rule (3) or (4). The commuting direction (to left or right) depends on the number of Toffoli gates it needs to commute with, the direction with fewer commutations will be chosen; if both directions have the same commuting number, then we choose the left direction.

\item Repeat the above step until all the CNOT gates are moved out.

\item Turn all the Toffoli gates into phase gadgets  and do the phase gadget fusion, using the method from \cite{nielbianwang, nielbianwangtqc}.

\item Output the circuit form and give the T-count.
\end {enumerate}
The following table presents a comparison of the results of our algorithm with the previous best (non-probabilistic) algorithms for reducing T-count.

\begin{table}[h!]
  \begin{center}
    \caption{T-count performance comparison.}
    \label{tab:table1}
    \begin{tabular}{c|c|c|c|c|c} 
\textbf{Circuit} & \textbf{Initial } $\#T$ & $\#T$ \textbf{in} \cite{NRSCM-2018} &$\#T$ \textbf{in} \cite{Aleks1}& $\#T$ \textbf{in} \cite{ZhangChen-2019}& \textbf{ Ours}\\
      \hline
    GF($2^4$)-mult&  112 & 68 & 68&68&62\\
     \hline
      GF($2^5$)-mult&  175 & 115 & 115&115&97\\
     \hline
     GF($2^6$)-mult&  252 & 150 & 150&150&131\\
     \hline
      GF($2^7$)-mult&  343 & 217 & 217&217&183\\
     \hline
     GF($2^8$)-mult&  448 & 264 & 264&264&263\\
     \hline
      GF($2^9$)-mult&  567 & 351 & -&351&299\\
     \hline
       GF($2^{10}$)-mult&  700 & 410 & -&410&361\\
     \hline
     GF($2^{16}$)-mult&  1792 & 1040 & -&1040&1038\\
     \hline
    \end{tabular}
  \end{center}
\end{table}
 \FloatBarrier
The circuits we obtained by running our algorithms can be found at \cite{Xiaoning}, they have been verified by \textit{feynver}  \cite{Amy-2018}.

As one can see from Table \ref{tab:table1}, our results outperform all the results of previous non-probabilistic algorithms. Note that the spider nest identities can be separately applied with  after running the above described algorithm, yet 
 we didn't apply these identities in our algorithm just because we want to see its own performance. 
 We would  expect better results if we use those identities. Probabilistic method in addition to our algorithm is in the similar situation as the spider nest identities. We also point out that the technique used in our algorithm can be applied to more benchmark circuits beyond the family of circuits of Galois field multipliers.
\section*{Acknowledgements} 

The authors thank Xiaoning Bian for his kind help on running numerical experiments on the algorithm proposed in this paper.  AM thanks the Mellon Mays Undergraduate Fellowship Program for its generous support of his research.  BC and QW are grateful for an EPSRC IAA in collaboration with Cambridge Quantum Computing Ltd. QW is supported by AFOSR grant FA2386-18-1-4028.









%
\bibliographystyle{eptcs}
\bibliography{ZXANDgate}    

\newpage
\section*{Appendix: Propositions, Lemmas and Proofs}
In this appendix we give  the Propositions, Lemmas and Proofs. Note that  in the proofs we just indicate the main rules which are used in derivations, for the sake of simplicity.
\subsubsection*{Proposition 1}

	For $m\geq 1$, \hspace{0.5cm} \input{TikZit/appendixP1.tikz} \hspace{0.5cm} (P1) \bigskip

	Proof. Using S1 we can decompose an $R_Z^{(1,m)}$ diagram into an $(m-1)$-fold composition of $R_Z^{(1,2)}$ diagrams. Repeatedly applying B2 then yields the desired result. \bigskip
	
\subsubsection*{Proposition 2}

	\input{TikZit/appendixP2.tikz}	\hspace{0.5cm}(P2)

	Proof. \hspace{0.25cm} \input{TikZit/appendixP2_Proof.tikz}

\subsubsection*{Proposition 3}

	\input{TikZit/appendixP3.tikz}	\hspace{0.5cm}(P3)

	Proof. \hspace{0.25cm} \input{TikZit/appendixP3_Proof.tikz} \bigskip

\subsubsection*{Proposition 4}

	\input{TikZit/appendixP4.tikz}	\hspace{0.5cm}(P4)

	Proof. \hspace{0.25cm} \input{TikZit/appendixP4_Proof.tikz}

\subsubsection*{Proposition 5}

	\input{TikZit/appendixP5.tikz} \hspace{0.5cm}(P5)

	Proof. \hspace{0.25cm} \input{TikZit/appendixP5_Proof.tikz} \bigskip

\subsubsection*{Proposition 6}

	\input{TikZit/appendixP6.tikz} \hspace{0.5cm}(P6)

	Proof.  \hspace{0.25cm} \input{TikZit/appendixP6_Proof.tikz} \bigskip

\subsubsection*{Proposition 7}

	\input{TikZit/appendixP7.tikz} \hspace{0.5cm}(P7)

	Proof. \hspace{0.25cm} \input{TikZit/appendixP7_Proof.tikz} \bigskip

\subsubsection*{Proposition 8}
	\beginpgfgraphicnamed{TikZit/zxandunit}
	\InputIfFileExists{TikZit/zxandunit.tikz}{}{\input{./figures/TikZit/zxandunit.tikz}}%
	\endpgfgraphicnamed
   \hspace{0.5cm}(P8)

	Proof. \hspace{0.25cm} %
	\beginpgfgraphicnamed{TikZit/zxandunitprf}
	\InputIfFileExists{TikZit/zxandunitprf.tikz}{}{\input{./figures/TikZit/zxandunitprf.tikz}}%
	\endpgfgraphicnamed
   \bigskip

\subsubsection*{Lemma 1}

	For all $n\geq0$, \hspace{0.5cm} \input{TikZit/appendixL1.tikz} \hspace{0.5cm} (L1) \bigskip

	Proof. The $n=0$ case follows from B2 and P6, the $n=1$ case follows from (S2) and (T3), and the $n=2$ case is simply A1. The lemma is true in all other cases, since whenever it holds for some $n\geq2$, it is also valid for $n+1$: 
	$$ %
	\beginpgfgraphicnamed{TikZit/appendixL1_Proof}
	\InputIfFileExists{TikZit/appendixL1_Proof.tikz}{}{\input{./figures/TikZit/appendixL1_Proof.tikz}}%
	\endpgfgraphicnamed
$$
	\bigskip\bigskip

\subsubsection*{Lemma 2}

	For all $n\geq0$, \hspace{0.5cm} \input{TikZit/appendixL2.tikz} \hspace{0.5cm} (L2) \bigskip

	Proof. The $n=0$ case follows from S2 and T3. We obtain all other cases by applying Lemma 1: 
		$$ %
	\beginpgfgraphicnamed{TikZit/appendixL2_Proof1}
	\InputIfFileExists{TikZit/appendixL2_Proof1.tikz}{}{\input{./figures/TikZit/appendixL2_Proof1.tikz}}%
	\endpgfgraphicnamed
$$
	\input{TikZit/appendixL2_Proof2.tikz} 
	\bigskip\bigskip

\subsubsection*{Lemma 3}

	\input{TikZit/appendixL3.tikz} \hspace{0.5cm} (L3) \bigskip

	Proof. 
		$$ %
	\beginpgfgraphicnamed{TikZit/appendixL3_Proof}
	\InputIfFileExists{TikZit/appendixL3_Proof.tikz}{}{\input{./figures/TikZit/appendixL3_Proof.tikz}}%
	\endpgfgraphicnamed
$$
	 \bigskip\bigskip

\subsubsection*{Lemma 4}

	\input{TikZit/appendixL4.tikz} \hspace{0.5cm} (L4) \bigskip

	\begin{proof}
	\[
	\beginpgfgraphicnamed{TikZit/appendixL4_Proof}
	\InputIfFileExists{TikZit/appendixL4_Proof.tikz}{}{\input{./figures/TikZit/appendixL4_Proof.tikz}}%
	\endpgfgraphicnamed
  
\]
	\end{proof}
\subsubsection*{Proof of Theorem \ref{decomposetheorem}}	
First, we need the following lemmas.
\begin{lemma} For any  $\alpha \in [0, 2\pi)$, we have:
\begin{equation}\label{a4equiv}
	\beginpgfgraphicnamed{TikZit/a4equivalent}
	\InputIfFileExists{TikZit/a4equivalent.tikz}{}{\input{./figures/TikZit/a4equivalent.tikz}}%
	\endpgfgraphicnamed
  
\end{equation}
\end{lemma}

\begin{proof}
\[
	\beginpgfgraphicnamed{TikZit/a4equivalentprf1}
	\InputIfFileExists{TikZit/a4equivalentprf1.tikz}{}{\input{./figures/TikZit/a4equivalentprf1.tikz}}%
	\endpgfgraphicnamed
  
\]
\end{proof}

\begin{lemma} For any  $\alpha \in [0, 2\pi), ~k\geq 1$, we have
\begin{equation}\label{induction}
	\beginpgfgraphicnamed{TikZit/inductessential}
	\InputIfFileExists{TikZit/inductessential.tikz}{}{\input{./figures/TikZit/inductessential.tikz}}%
	\endpgfgraphicnamed
  
\end{equation}
\end{lemma}

\begin{proof}
\[
	\beginpgfgraphicnamed{TikZit/inductessentialprf2}
	\InputIfFileExists{TikZit/inductessentialprf2.tikz}{}{\input{./figures/TikZit/inductessentialprf2.tikz}}%
	\endpgfgraphicnamed
  
\]
\end{proof}
Now we are ready to prove Theorem \ref{decomposetheorem}.
\begin{proof}
We prove by induction on $n$. If $n=2$, it is just the equality (\ref{gadget-and-2line}) which can be derived from (\ref{gadget-and}). Assume that (\ref{phasedecomfull}) holds for $n=m$. Then for $n=m+1$, we have 
\[
	\beginpgfgraphicnamed{TikZit/phasedecomfullprf1}
	\InputIfFileExists{TikZit/phasedecomfullprf1.tikz}{}{\input{./figures/TikZit/phasedecomfullprf1.tikz}}%
	\endpgfgraphicnamed
  
\]
Therefore, (\ref{phasedecomfull}) holds for $n=m+1$. This completes the proof. 
\end{proof}

\subsubsection*{Proof of Theorem \ref{qbccomplete}}	
\begin{proof}
It suffices to prove that each of the six transformation rules, expressed as ZX-diagrams, are derivable using only the rule set given in the tables above. 

(1) $[t_1, C_1]\cdot[t_1, C_1]\iff\varepsilon$ 

	\scalebox{0.65}{\input{TikZit/QBCRule1-1.tikz}}\\
	\scalebox{0.65}{\input{TikZit/QBCRule1-2.tikz}} \\

(2) $[t_1, C_1]\cdot[t_2, C_2]\iff[t_2, C_2]\cdot[t_1, C_1]$,  if $t_1\not\in C_2$ and $t_2\not\in C_1$

This is a simple consequence of (S1). The case in which $t_1\neq t_2$ is shown below. 

	\scalebox{0.65}{ \input{TikZit/QBCRule2.tikz}}
	 

(3) $[t_1, C_1]\cdot[t_2, C_2]\iff[t_2, C_2]\cdot[t_1, C_1]\cdot[t_1, C_1\cup C_2-\{t_2\}]$ if $t_1\not\in C_2$ and $t_2\in C_1$ \newline

	\scalebox{0.65}{\input{TikZit/QBCRule3-1.tikz}}\\
	\scalebox{0.65}{\input{TikZit/QBCRule3-3.tikz}}\\
	 \bigskip
	 The second-to-last equality comes from rule 1. \bigskip\bigskip

(4) $[t_1, C_1]\cdot[t_2, C_2]\iff[t_2, C_1\cup C_2-\{t_1\}]\cdot[t_2, C_2]\cdot[t_1, C_1]$ if $t_1\in C_2$ and $t_2\not\in C_1$ 

This follows immediately from rule 3: 

\scalebox{0.65}{\input{TikZit/QBCRule4.tikz}}

(5) $[t_1, \{c_1\}]\cdot[t_2, C_2\cup\{c_1\}]\iff[t_1, \{c_1\}]\cdot[t_2, C_2\cup\{t_1\}]$ if $t_1>n+1$ and no CNOT$_{t_1}$ before $[t_1, \{c_1\}]$ \newline\smallskip

Note that we must assume $t_1\neq t_2$ for the generalized CNOT gate $[t_2, C_2\cup\{t_1\}]$ to be well-defined.

\scalebox{0.65}{\input{TikZit/QBCRule5.tikz}}


(6) $[t, C]\iff\varepsilon$ if there is an integer $i$ such that $i\in C$, $i>n+1$, and there is no CNOT$_i$ before $[t,C]$ 

\scalebox{0.65}{\input{TikZit/QBCRule6.tikz}}

\end{proof}


\end{document}